\newcommand{\size}[1]{\left|#1\right|} 
\newcommand{\pr}{^\prime}
\newcommand{\sym}[1]{\text{Sym}(#1)}
\newcommand{\be}{\begin{equation}}
\newcommand{\ee}{\end{equation}}
\newcommand{\ket}[1]{|#1\rangle}
\newcommand{\bra}[1]{\langle #1|}
\newcommand{\ketbra}[2]{\ket{#1}\!\bra{#2}}        
\newcommand{\extwk}{\widehat{W}^{(k)}}
\newcommand{\R}{\mathcal{R}}
\newcommand{\Z}{\mathcal{Z}}
\newcommand{\cel}{\text{Cel}}
\newcommand{\matv}{\text{Mat}_V}
\newcommand{\tr}{\text{tr}}
\newtheorem{definition}{Definition}
\newtheorem{prop}{Proposition}
\newtheorem{theorem}{Theorem}
\newtheorem{corollary}{Corollary}
\newtheorem{lemma}{Lemma}
\begin{document}

\title{$k$-Boson Quantum Walks Do Not Distinguish Arbitrary Graphs}
\author{Jamie Smith\footnote[2]{Supported by the Natural Sciences and Engineering Research Council of Canada}}
\affiliation{Institute for Quantum Computing and Department of Combinatorics \&
Optimization, \\University of Waterloo, Waterloo N2L 3G1, ON Canada}
\email{ja5smith@iqc.ca}

\begin{abstract}In this paper, we define $k$-equivalence, a relation on graphs that relies on their associated cellular algebras.  We show that a $k$-Boson quantum walk cannot distinguish pairs of graphs that are $k$-equivalent.  The existence of pairs of $k$-equivalent graphs has been shown by Ponomarenko et al. \cite{Barghi:2009p42, Evdokimov:1999p668}.  This gives a negative answer to a question posed by Gamble et al. \cite{Gamble:2010p1085}.
\end{abstract}

\maketitle

\section{Introduction}

In a recent paper, Gamble, Friesen, Zhou, Joynt and Coppersmith \cite{Gamble:2010p1085} consider the application of a multi-paritcle quantum walks to the graph isomorphism problem.   In particular, they examine (a) two non-interacting Bosons; (b) two non-interacting Fermions; and (c) two interacting Bosons.  In each case, a graph $G=(V,E)$ is associated with an Hamiltonian $H$; this Hamiltonian will depend on the dynamics and number of particles chosen.  The corresponding unitary operator is then given by
$$U=e^{-itH}$$
A detailed description of continuous time quantum walks and their properties can be found in \cite{Childs:2008p1088}.

The {\bf Green's functions} associated with this graph are the values $$\mathcal{G}(i,j)=\bra{j}U\ket{i}$$ where $i$ and $j$ run over an appropriate basis (in the case of \cite{Gamble:2010p1085}, this is the two-particle basis).  We say that two graphs are {\bf distinguishable} if their sets of Green's functions differ.

Gamble et al. prove that non-interacting Bosons and Fermions fail to distinguish pairs of non-isomorphic strongly regular graphs with the same parameters (see Definition \ref{srg}).  In order to evaluate the effectiveness of interacting Bosons, they consider all tabulated pairs of non-isomorphic strongly regular graphs with up to 64 vertices.  They find that the interacting 2-Boson walk does indeed distinguish all the pairs of strongly regular graphs they considered.  They go on to ask if, for a high enough value of $k$, an interacting $k$-Boson quantum walk could distinguish all pairs of non-isomorphic graphs.  This would place the graph isomorphism problem in P.  In this paper, we show that for any $k$, there are pairs of non-isomorphic graphs that are not distinguished by an interacting $k$-Boson quantum walk.

In order to prove this, we introduce cellular algebras in Section \ref{CA}.  We define weak and strong notions of isomorphisms between cellular algebras, as well as the construction of these algebras from graphs.  Section \ref{kequiv} defines the $k$-equivalence of graphs, which is based on the properties of their associated cellular algebras.  Sections \ref{unit}-\ref{kboson} apply these ideas to multi-particle walks on graphs, showing that two $k$-equivalent graphs are not distinguished by a $k$-Boson quantum walk.  Section \ref{2boson} is concerned with the case of a two-Boson quantum walk, directly addressing \cite{Gamble:2010p1085}.  Section \ref{kboson} generalizes this to $k$ Bosons.  The existence of $k$-equivalent graphs for any positive integer $k$ is proven in \cite{Barghi:2009p42} and \cite{Evdokimov:1999p668}.  In fact, these are the same graphs that are used to demonstrate that the spectrum of the $k$-symmetric power of a graph do not distinguish arbitrary graphs (see \cite{Audenaert:2005p14,Barghi:2009p42}).  Indeed, many of the results in this paper are analogous to those of \cite{Barghi:2009p42}; the main innovation in this paper is applying these results to multi-particle quantum walks.

\section{Cellular Algebras}\label{CA}

\subsection{Definitions}

Cellular algebras are a generalization of coherent configurations, which were developed by Weisfeiler and Lehman (\cite{Weisfeiler:1968fk}) and Higman (\cite{HIGMAN:1970p409}) as an approach to the graph isomorphism problem.  We will see how these algebras are generated from graphs;  these generated cellular algebras capture structural information about the underlying graph that we will use to prove the main theorem of this paper.

Let $V$ be a finite vertex set.  Then, $\matv$ is the algebra of all complex-valued matrices indexed by $V$.  We define a cellular algebra as follows:

\begin{definition} Let $W$ be a subalgebra of $\matv$.  Then, $W$ is a cellular algebra if the following hold:
\begin{enumerate}[(i)]
\item $W$ is closed under Hadamard multiplication $\circ$.
\item $W$ is closed under complex conjugation $^\dagger$.
\item $W$ contains the identity $I$ and the all-ones matrix $J$.
\end{enumerate}
\end{definition}

The following is a useful consequence of this definition:

\begin{prop} If $W$ is a cellular algebra, then
\begin{enumerate}[(i)]
\item $W$ has a unique basis of $0-1$ matrices $\R$, and $\sum_{R\in\R}R=J$.
\item There is a subset $C\subseteq\R$ such that $\sum_{R\in\R}R=I$.
\item If $R\in\R$, then $R^\dagger\in\R$
\end{enumerate}
\end{prop}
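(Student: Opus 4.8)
The plan is to work entirely with the Hadamard product. Since $W$ is a linear subspace closed under $\circ$ and contains the Hadamard identity $J$, the pair $(W,\circ)$ is a commutative unital subalgebra of $(\matv,\circ)$. Identifying $\matv$ with the functions on $V\times V$ under pointwise multiplication, $(\matv,\circ)\cong\mathbb{C}^{V\times V}$ is a reduced (nilpotent-free) finite-dimensional commutative $\mathbb{C}$-algebra, and its idempotents are exactly the $0$-$1$ matrices (the condition $A\circ A=A$ reads $A_{ij}^2=A_{ij}$ entrywise). A subalgebra of a reduced algebra is reduced, so $(W,\circ)$ is itself a finite-dimensional reduced commutative algebra over an algebraically closed field, hence semisimple and isomorphic to a product of copies of $\mathbb{C}$. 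This structural fact is what drives the whole proposition.

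For (i) I would take $R_1,\dots,R_r$ to be the primitive (minimal nonzero) idempotents of $(W,\circ)$. Being idempotents of $\matv$, each $R_a$ is a $0$-$1$ matrix; by the product decomposition they are pairwise Hadamard-orthogonal, form a linear basis of $W$, and sum to the unit of $W$, which is $J$. Concretely, these are the indicators of the classes of the equivalence relation $(i,j)\sim(k,l)\iff A_{ij}=A_{kl}$ for all $A\in W$: that relation partitions $V\times V$, and the assertion that each class-indicator actually lies in $W$ is precisely the statement that $W$ is spanned by its primitive idempotents (equivalently, one builds them by Lagrange-style products of separating elements, which is legal because $W$ is $\circ$-closed and contains $J$). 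Uniqueness follows because any $0$-$1$ matrix in $W$, expanded in the basis $\{R_a\}$, must have $0/1$ coefficients (the supports are disjoint), so it is a union of classes; a $0$-$1$ basis can therefore only consist of the minimal ones, forcing $\R=\{R_a\}$.

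Part (ii) is then immediate: $I\in W$ by the definition and is a $0$-$1$ matrix, so writing $I=\sum_a c_aR_a$ and using the disjoint-support argument gives every $c_a\in\{0,1\}$; setting $C=\{R_a:c_a=1\}$ yields $\sum_{R\in C}R=I$. For (iii) I would use closure under ${}^\dagger$: given $R\in\R$, the matrix $R^\dagger$ lies in $W$, and since $R$ is real $0$-$1$ its conjugate transpose is again $0$-$1$. Transposition $(i,j)\mapsto(j,i)$ preserves the relation $\sim$ (if $A\in W$ then $A^\dagger\in W$, so agreement of all elements on two entries is equivalent to agreement on the transposed entries), hence it permutes the classes and sends primitive idempotents to primitive idempotents; therefore $R^\dagger$ is one of the $R_a$, i.e. $R^\dagger\in\R$.

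The main obstacle is the reverse inclusion in (i): proving that each class-indicator $R_a$ genuinely belongs to $W$, not merely that $W$ sits inside their span. Everything else is bookkeeping with disjoint $0$-$1$ supports, but this one step is where the algebra structure does the real work. I would anchor it on the primitive-idempotent decomposition of the semisimple algebra $(W,\circ)$ (or, if a self-contained argument is preferred, on an explicit interpolation using $\circ$-products and subtractions of multiples of $J$), and I would be careful to record that closure under ${}^\dagger$ is exactly what is needed to obtain (iii).
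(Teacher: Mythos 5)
The paper states this proposition without any proof (it is introduced only as ``a useful consequence of this definition''), so there is no argument of the author's to compare against; I am assessing yours on its own terms. Your route --- viewing $(W,\circ)$ as a finite-dimensional reduced commutative unital subalgebra of $(\matv,\circ)\cong\mathbb{C}^{V\times V}$, hence semisimple and a product of copies of $\mathbb{C}$, and taking $\R$ to be its primitive idempotents --- is the standard one and is essentially sound: the primitive idempotents are $0$-$1$ matrices with pairwise disjoint supports summing to the $\circ$-unit $J$, they are exactly the indicators of the classes of your relation $\sim$ (this is where $\circ$-closure and $J\in W$ do the real work, as you correctly flag), and parts (ii) and (iii) follow by the disjoint-support bookkeeping you describe. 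For (iii) there is an even shorter finish: $\{R^\dagger : R\in\R\}$ is again a family of disjointly supported $0$-$1$ matrices in $W$ summing to $J^\dagger=J$, so it equals $\R$ by uniqueness.

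The one step that fails as written is the uniqueness argument. From ``every $0$-$1$ matrix in $W$ is a union of classes'' you conclude that ``a $0$-$1$ basis can therefore only consist of the minimal ones,'' but that does not follow: for the cellular closure of a strongly regular graph, whose canonical basis is $\{I,\,A,\,J-I-A\}$, the set $\{I,\,A,\,J-I\}$ is also a basis of $0$-$1$ matrices, and $J-I=A+(J-I-A)$ is not minimal. What is actually unique is the basis of $0$-$1$ matrices \emph{summing to $J$} (equivalently, with pairwise disjoint supports) --- precisely the extra condition carried in part (i) of the statement. The repair is one line: if $M_1,\dots,M_r$ are $0$-$1$ matrices forming a basis with $\sum_i M_i=J$, write $M_i=\sum_{a\in S_i}R_a$ with each $S_i$ a nonempty subset of $\{1,\dots,r\}$; then $\sum_i\sum_{a\in S_i}R_a=\sum_a R_a$ forces the $S_i$ to partition $\{1,\dots,r\}$ into $r$ nonempty parts, hence singletons, so $\{M_i\}=\{R_a\}$. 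With that adjustment your proof is complete. (Part (ii) of the statement as printed has a typo --- the sum should run over $C$, not $\R$ --- which you have correctly read through.)
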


We call $\R$ the set of {\bf basis relations}\footnote{The term {\em basis relation} comes from an alternative  definition, in which we consider binary relations on $V$, rather than the equivalent $0-1$ matrices} of $W$.  We will also use $\R^*$ to denote the set of all sums of elements of $\R$; this is the set of {\bf relations} of $W$.  A set of vertices $U\subseteq V$ is called a {\bf cell} of $W$ if $I_U$, the identity on $U$, is a basis relation of $W$.  The set of cells of $W$ is denoted by $\cel(W)$.

\subsection{The Cellular Closure: Cellular Algebras from Graphs}\label{secsrg}

The cellular algebra $W=[M_1,...,M_\ell]$ is the smallest cellular algebra containing $\{M_1,...,M_\ell\}$, a set of $n\times n$ matrices.  We say that $W$ is {\bf generated} by $\{M_1,...,M_\ell\}$.  If $G=(V,E)$ is a graph with adjacency matrix $A$, then we say that $W_G=[A]$ is the {\bf cellular closure} of $G$.  This will sometimes be denoted by $[G]$.  Strongly regular graphs have the simplest cellular closures:

\begin{definition}\label{srg}
A {\bf strongly regular graph} $G$ is associated with a set of parameters $(n,k,\lambda,\mu)$ such that:
\begin{enumerate}[(i)]
\item $G$ has $n$ vertices.
\item Each vertex has degree $k$.
\item Each pair of adjacent vertices share $\lambda$ common neighbours.
\item Each pair of non-adjacent vertices share $\mu$ common neighbours.
\end{enumerate}
\end{definition}

If a strongly regular graph $G$ has adjacency matrix $A$, it is easily verified that $\{I, A, (J-I-A)\}$ form the basis for the cellular algebra $W_G=[A]$.  While most generated cellular algebras are not as straightforward as this, the Weisfeiler-Lehman algorithm (see \cite{Weisfeiler:1968fk}, \cite{Evdokimov:1999p1005}) calculates the cellular closure of a set of matrices in polynomial time.

\subsection{Weak and Strong Isomorphisms}

We will define two notions of isomorphisms between cellular algebras--- one of a combinatorial nature (strong), and the other of an algebraic nature (weak).  Let $W$ and $W\pr$ be cellular algebras with vertex sets $V$ and $V\pr$ and basis relations $\R$ and $\R\pr$, respectively. 

\begin{definition} A {\bf weak isomorphism} is a bijection $\phi\colon W\to W\pr$ that preserves addition, matrix multiplication, Hadamard multiplication and complex conjugation.
\end{definition}
Two immediate consequences of this definition are that $\phi(I)=I$, and $\phi$ is a bijection from the basis relations $\R$ to $\R\pr$.  We also note that $\phi$ induces a bijection between cells, $\phi\pr\colon \cel(W)\to\cel(W\pr)$.  We will use the following lemma and its corollary in the proof of Theorem \ref{indistinguishable}.

\begin{lemma}\label{celltocell}
Take $X\in\cel(W)$.  Then $\size{X}=\size{\phi\pr(X)}$.\end{lemma}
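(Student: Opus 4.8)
The plan is to recover the cardinality $\size{X}$ from purely algebraic data. This is the crux: a weak isomorphism $\phi$ is only required to respect addition, matrix multiplication, Hadamard multiplication and conjugation, so it need not have anything to do with the underlying vertex sets, and in particular we cannot appeal to traces or ranks directly. Instead I would look for a matrix, built from $W$ using only these operations, whose square produces $\size{X}$ as a structure constant; since structure constants are preserved by $\phi$, this will transport $\size{X}$ to $\size{\phi\pr(X)}$. Throughout I identify the cell $X$ with its basis relation $I_X$, and write $X\pr=\phi\pr(X)$, so that by definition of the induced map $\phi(I_X)=I_{X\pr}$.

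First I would form the matrix $J_X=I_X\,J\,I_X$. Because $I_X\in\R\subseteq W$ and $J\in W$, and $W$ is closed under matrix multiplication, we have $J_X\in W$; a direct check shows $J_X$ is the all-ones matrix supported on the block $X\times X$ and is nonzero. The key computation is then $J_X^2=\size{X}\,J_X$, since each entry of $J_X^2$ indexed by $X\times X$ is a sum of $\size{X}$ ones. This is exactly the identity I want: the scalar $\size{X}$ now appears as a coefficient in a multiplicative relation inside $W$.

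Next I would push this relation through $\phi$. Using $\phi(I_X)=I_{X\pr}$ and $\phi(J)=J$ (the latter because $\phi$ is additive and maps $\R$ bijectively onto $\R\pr$, so it sends $\sum_{R\in\R}R=J$ to $\sum_{R\in\R\pr}R=J$), multiplicativity gives $\phi(J_X)=J_{X\pr}$, the all-ones block on $X\pr\times X\pr$. Applying $\phi$ to $J_X^2=\size{X}J_X$ and using that additivity forces $\mathbb{Z}$-linearity, I obtain $J_{X\pr}^2=\size{X}\,J_{X\pr}$. On the other hand the same entrywise computation as before yields $J_{X\pr}^2=\size{X\pr}\,J_{X\pr}$. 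Since $J_{X\pr}\neq0$, comparing coefficients gives $\size{X}=\size{X\pr}=\size{\phi\pr(X)}$. The one subtle point, and the step I would be most careful about, is the passage of the scalar through $\phi$: a weak isomorphism is only assumed additive rather than $\mathbb{C}$-linear, so I must invoke that additivity already forces it to commute with multiplication by the integer $\size{X}$, which is what makes the coefficient comparison legitimate.
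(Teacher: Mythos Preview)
Your proof is correct and follows essentially the same strategy as the paper: both establish $\phi(J_X)=J_{\phi\pr(X)}$ and then read off the cardinality from the identity $J_X^2=\size{X}\,J_X$. The only cosmetic difference is that the paper obtains $\phi(J_X)=J_{\phi\pr(X)}$ by showing $\phi$ bijects the basis relations supported on $X\times X$ with those on $\phi\pr(X)\times\phi\pr(X)$ and summing, whereas you get there more directly via $J_X=I_X\,J\,I_X$ together with $\phi(J)=J$; your explicit remark about $\mathbb{Z}$-linearity also makes precise a step the paper leaves implicit.
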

\begin{proof} Let $\R_X$ be the set basis relations of $W$ restricted to the vertex set $X$.  Define $\R\pr_{\phi\pr(X)}$ analogously.  We will first show that $J_X$, the all-ones matrix on the set $X$, is mapped to $J_{\phi\pr(X)}$, the all-ones matrix on the corresponding cell in $W\pr$.  Since, $\phi(I_X)=I_{\phi\pr(X)}$ and for any $R\in\R_X$,
$$I_X\cdot R\cdot I_X=R$$
it follows that 
$$I_{\phi\pr(X)}\cdot\phi(R)\cdot I_{\phi\pr(X)}=\phi(R).$$

Therefore, if $R\in\R_X$, then $\phi(R)\in\R\pr_{\phi\pr(X)}$.  Moreover,

$$\phi(J_X)=\sum_{R\in\R_X}\phi(R)=\sum_{R\in\R\pr_{\phi\pr(X)}}R=J_{\phi\pr(X)}$$

Now, comparing
$$J_X\cdot J_X=\size{X}\cdot J_X$$
and
$$J_{\phi\pr(X)}\cdot J_{\phi\pr(X)}=\size{\phi\pr(X)}\cdot J_{\phi\pr(X)}$$
gives us $\size{X}=\size{\phi\pr(X)}$.

\end{proof}

\begin{corollary}\label{treq}
For all $R\in W$, $\tr(R)=\tr(\phi(R))$.
\end{corollary}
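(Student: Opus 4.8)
The plan is to reduce the statement to the basis relations $\R$ of $W$ and then feed it into Lemma \ref{celltocell}. Since $\phi$ is additive (indeed linear) and restricts to a bijection $\R\to\R\pr$, and since the trace is itself linear, it suffices to verify $\tr(R)=\tr(\phi(R))$ for each $R\in\R$ and then expand an arbitrary element in the basis. The structural fact I would exploit is that the diagonal of any matrix in $W$ is controlled entirely by the cells: the cell identities $\{I_X : X\in\cel(W)\}$ are exactly the basis relations meeting the diagonal, and they satisfy $\sum_{X\in\cel(W)}I_X=I$. This splits $\R$ into two types, and I handle each separately.

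First, if $R=I_X$ for a cell $X$, then $\tr(R)=\size{X}$. Because $\phi$ induces the cell bijection $\phi\pr$ with $\phi(I_X)=I_{\phi\pr(X)}$, we get $\tr(\phi(R))=\size{\phi\pr(X)}$, and Lemma \ref{celltocell} supplies $\size{X}=\size{\phi\pr(X)}$, so the two traces coincide. Second, if $R\in\R$ is not a cell identity, then $R$ is supported off the diagonal, whence $\tr(R)=0$; since $\phi$ carries $\R$ bijectively onto $\R\pr$ and maps the cell identities of $W$ precisely onto those of $W\pr$, the image $\phi(R)$ is again a non-cell basis relation, hence off-diagonal, giving $\tr(\phi(R))=0$. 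Summing $\tr$ over the basis expansion of a general $R\in W$ then yields the corollary.

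The step I expect to require the most care is the claim in the second case that $\phi$ sends off-diagonal basis relations to off-diagonal basis relations, i.e.\ that $\phi$ respects the diagonal/cell decomposition; here I would lean on the facts already recorded in the text, namely $\phi(I)=I$, the bijectivity of $\phi$ on $\R$, and the identification $\phi(I_X)=I_{\phi\pr(X)}$ furnished by $\phi\pr$. As an alternative that sidesteps the case analysis, one can use the single identity $J(R\circ I)J=\tr(R)\,J$: applying $\phi$ and using $\phi(J)=J$, $\phi(I)=I$, and preservation of both products turns the left-hand side into $\tr(\phi(R))\,J$, while for a basis relation $R$ the scalar $\tr(R)$ is a nonnegative integer, so additivity gives $\phi(\tr(R)\,J)=\tr(R)\,J$; comparing entries then forces $\tr(\phi(R))=\tr(R)$.
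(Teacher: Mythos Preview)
Your proposal is correct and follows essentially the same route as the paper: both extract the trace from the cell decomposition and feed the resulting cell sizes into Lemma~\ref{celltocell}. The paper's argument is organized slightly differently---rather than reducing to basis relations and splitting into cell/non-cell cases, it treats an arbitrary $R\in W$ at once via the Hadamard identity $R\circ I_X=q_R(X)\,I_X$, giving $\tr(R)=\sum_{X\in\cel(W)}q_R(X)\size{X}$ and then applying $\phi$ and Lemma~\ref{celltocell} directly; your case analysis is just this formula specialised to $R\in\R$.
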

\begin{proof} Let $C$ and $C\pr$ be the basis relations of $W$ and $W\pr$ that sum to the identity.  Then, for each $R\in W$ and $I_X\in C$,
$$R\circ I_X=q_R(X)\cdot I_X$$
for some $q_R(X)\in \mathbb{C}$ and
$$\tr(R)=\sum_{X\in\cel(W)}q_R(X)\cdot\size{X}$$
Since $\phi$ is a weak isomorphism, and applying Lemma \ref{celltocell},
\begin{flalign*}
\tr(\phi(R))=&\sum_{X\in\cel(W)}q_R(X)\cdot\size{\phi\pr(X)}\\
=&\tr(R)
\end{flalign*}
\end{proof}

Weak isomorphisms respect algebraic structure, but do not take into account the vertices underlying the cellular algebra.

\begin{definition} A {\bf strong isomorphism} is a bijection $\psi\colon V\to V\pr$ such that, for each $R\in W$, there is a unique $R\pr\in W\pr$ such that
$$\forall u,v\in V,\: R(u,v)=R\pr(\psi(u),\psi(v)).$$
\end{definition}
Note that a strong isomorphism $\psi$ induces a weak isomorphism.  Unfortunately, not all weak isomorphisms are induced by a strong isomorphism.

\subsection{Cellular Algebra Extensions}
The $k$-extension of a cellular algebra $W$ is a larger algebra that contains additional structural information about $W$.  Before constructing the $k$-extension, we first need to define the centralizer algebra.
\begin{definition}
Let $G$ be a group acting on the set $S$.  The {\bf centralizer algebra} is defined as follows:
$$\Z(G,S)=\{A\in\text{Mat}_S\colon \forall g\in G,\;A^g=A\}$$
\end{definition}
In the next definition, we use the centralizer algebra $\Z(\sym{V},V^k)$.  In this case, $\sym(V)$ acts entrywise on $V^k$.
\begin{definition}
The {\bf $k$-extension} $\extwk$ of a cellular algebra $W$ is the smallest cellular algebra containing $W^k$ and $\Z(\sym{V},V^k)$:
$$\extwk=[W^k,\Z(\sym{V},V^k)]$$
\end{definition}
The following lemma is reproduced from \cite{Barghi:2009p42}, and was originally proven in \cite{Evdokimov:2000p1036}:
\begin{lemma}\label{cylindric1} Let $S=\{R_{i,j}\colon 1\leq i,j\leq k\}\subseteq \R^*$ be a set of relations.  Define the cylindric relation $\text{Cyl}_S$ such that, given $\overline{x},\overline{y}\in V^k$,
$$\text{Cyl}_S(\overline{x},\overline{y})=\prod_{i,j}R_{i,j}(x_i,y_j).$$
Then, $\text{Cyl}_S\in\extwk$.
\end{lemma}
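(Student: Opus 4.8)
The plan is to realize $\text{Cyl}_S$ as a Hadamard product of $k^2$ one-coordinate relations and to show each factor lies in $\extwk$; closure of the cellular algebra under $\circ$ then finishes the argument. For a pair $(i,j)$ and a relation $R\in\R^*$, let $M_{i,j}^R$ be the $0$-$1$ matrix on $V^k$ with $M_{i,j}^R(\overline{x},\overline{y})=R(x_i,y_j)$. Since each factor $R_{i,j}(x_i,y_j)$ in the defining product depends only on the single pair $(x_i,y_j)$, we have $\text{Cyl}_S=\bigcirc_{i,j} M_{i,j}^{R_{i,j}}$, so it suffices to prove $M_{i,j}^R\in\extwk$ for every pair $(i,j)$ and every $R\in\R^*$.

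I would dispose of the diagonal case $i=j$ first, since it needs only the tensor part $W^k$. Form the tensor product with $R$ in the $i$-th slot and the all-ones matrix $J\in W$ in every other slot; its value at $(\overline{x},\overline{y})$ is $\prod_{l\neq i}J(x_l,y_l)\cdot R(x_i,y_i)=R(x_i,y_i)$, so this matrix equals $M_{i,i}^R$ and lies in $W^k\subseteq\extwk$.

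The off-diagonal case $i\neq j$ is where the centralizer algebra is needed: the idea is to manufacture $M_{i,j}^R$ by relocating the diagonal constraint $M_{i,i}^R$ from position $(i,i)$ to position $(i,j)$ with a coordinate permutation. For $\sigma\in S_k$ acting on positions, let $P_\sigma$ be the matrix on $V^k$ with $P_\sigma(\overline{y},\overline{z})=1$ exactly when $z_l=y_{\sigma(l)}$ for all $l$. Because $P_\sigma$ is defined purely through equalities among the coordinates, it satisfies $P_\sigma^g=P_\sigma$ for every $g\in\sym{V}$, so $P_\sigma\in\Z(\sym{V},V^k)\subseteq\extwk$. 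Taking $\sigma$ to be the transposition $(i\,j)$, matrix multiplication gives $(M_{i,i}^R\cdot P_\sigma)(\overline{x},\overline{z})=\sum_{\overline{y}}R(x_i,y_i)P_\sigma(\overline{y},\overline{z})$, and the sum collapses to the unique $\overline{y}$ with $y_i=z_j$, yielding $R(x_i,z_j)=M_{i,j}^R(\overline{x},\overline{z})$. As $\extwk$ is an algebra, it is closed under matrix multiplication, so $M_{i,j}^R\in\extwk$.

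Combining the two cases, every factor $M_{i,j}^{R_{i,j}}$ lies in $\extwk$, and closure under Hadamard multiplication gives $\text{Cyl}_S\in\extwk$. The hard part is precisely the off-diagonal step: one must check that the coordinate-permutation matrices genuinely belong to the centralizer algebra $\Z(\sym{V},V^k)$ and that multiplying the diagonal tensor by $P_{(i\,j)}$ correctly transplants the constraint onto the pair $(x_i,y_j)$. The diagonal case and the final Hadamard assembly are routine by comparison.
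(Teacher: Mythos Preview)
Your argument is correct. The paper itself does not supply a proof of this lemma; it simply quotes the statement from \cite{Barghi:2009p42} and attributes the original proof to \cite{Evdokimov:2000p1036}. Your self-contained argument---factoring $\text{Cyl}_S$ as a Hadamard product of single-coordinate matrices $M_{i,j}^R$, realizing the diagonal factors as simple tensors in $W^k$, and transporting the off-diagonal constraints via the coordinate-permutation matrices $P_\sigma\in\Z(\sym{V},V^k)$---is essentially the standard proof one finds in the cited literature, so there is nothing to compare against in the paper itself. One small point of phrasing: when you write ``the sum collapses to the unique $\overline{y}$ with $y_i=z_j$,'' what you mean is the unique $\overline{y}$ satisfying all the constraints $y_{\sigma(l)}=z_l$ (of which $y_i=z_j$ is the one that matters for the value $R(x_i,y_i)$); the computation and conclusion are unaffected.
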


\section{The $k$-Equivalence of Graphs} \label{kequiv}

We would like to use these ideas of weak isomorphism and $k$-extension to draw meaningful connections between graphs.

\begin{definition} Let $G$ and $G\pr$ be graphs with adjacency matrices $A$ and $A\pr$.  Then, $G$ and $G\pr$ are {\bf equivalent} if there is a weak isomorphism $\phi\colon [G]\to[G\pr]$ such that $\phi(A)=A\pr$.  We say that $\phi$ is a similarity from $G$ to $G\pr$.\end{definition}

Since $[G]$ is the smallest cellular algebra containing $A$, the weak isomorphism $\phi$ is in fact uniquely determined.  We now broaden this definition to take into account $k$-extensions

\begin{definition} Let $G$ and $G\pr$ be graphs with adjacency matrices $A$ and $A\pr$.  Then $\phi$ is a $k$-equivalence if
\begin{enumerate}[(i)]
\item It is an equivalence from $G$ to $G\pr$.
\item There exists a weak isomorphism $\widehat\phi\colon \widehat{W_G}^{(k)}\to \widehat{W_{G\pr}}^{(k)}$ such that
$$\widehat\phi\mid_{W_G^k}=\phi^k\qquad and \qquad\widehat\phi\mid_{\Z(\sym{V},V^k)}=I$$
\end{enumerate}
\end{definition}

Clearly, 1-equivalence corresponds with our existing definition of graph equivalence.  The following lemma appears in \cite{Barghi:2009p42}, and tells us that a $k$-equivalence acts in a convenient way on cylindric relations.

\begin{lemma}\label{cylindric2} Let $\phi$ be a $k$-equivalence from $G$ to $G\pr$.  Let $S\subseteq \R^*$ be a set of relations of $W_G$.  Then,
$$\widehat\phi(\text{Cyl}_S)=\text{Cyl}_{\phi(S)}$$
where $\phi(S)=\{\phi(R_{i,j})\colon R_{i,j}\in S\}$.
\end{lemma}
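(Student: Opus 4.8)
The plan is to exploit the fact that $\widehat\phi$ is a weak isomorphism whose action on the two generating families of the $k$-extension is completely prescribed: on the tensor powers $W_G^k$ it acts as $\phi^k$, and on the centralizer algebra $\Z(\sym{V},V^k)$ it acts as the identity. So if I can write $\text{Cyl}_S$ explicitly as a combination, under addition, matrix multiplication and Hadamard multiplication, of elements of $W_G^k$ and $\Z(\sym{V},V^k)$, then applying $\widehat\phi$ and using that it preserves all three operations will let me read off the image, with each factor $R_{i,j}$ replaced by $\phi(R_{i,j})$ and every centralizer factor left unchanged. This is essentially the decomposition that already underlies the proof of Lemma \ref{cylindric1}.

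First I would reduce to single factors. From the defining entries $\text{Cyl}_S(\overline x,\overline y)=\prod_{i,j}R_{i,j}(x_i,y_j)$, the relation $\text{Cyl}_S$ is the Hadamard product, over all pairs $(i,j)$, of the matrices $\tilde R_{i,j}$ on $V^k$ given by $\tilde R_{i,j}(\overline x,\overline y)=R_{i,j}(x_i,y_j)$. Since $\widehat\phi$ preserves Hadamard multiplication, it suffices to show $\widehat\phi(\tilde R_{i,j})=\widetilde{\phi(R_{i,j})}_{\,i,j}$ for each pair $(i,j)$, where the right-hand matrix is the analogue built from $\phi(R_{i,j})$.

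Next I would factor each $\tilde R_{i,j}$. Let $T_i=J^{\otimes(i-1)}\otimes R_{i,j}\otimes J^{\otimes(k-i)}\in W_G^k$, whose $(\overline x,\overline y)$ entry is $R_{i,j}(x_i,y_i)$, and let $P_\tau\in\Z(\sym{V},V^k)$ be the coordinate-permutation matrix for some $\tau$ with $\tau(i)=j$. A one-line entry computation gives $(T_iP_\tau)(\overline x,\overline y)=R_{i,j}(x_i,y_j)$, so $\tilde R_{i,j}=T_iP_\tau$. Because $\widehat\phi$ preserves matrix products and $\phi(J)=J$, I get $\widehat\phi(\tilde R_{i,j})=\phi^k(T_i)\,P_\tau$, where $\phi^k(T_i)$ is the same tensor power but with $\phi(R_{i,j})$ in slot $i$; the identical computation then yields $\widetilde{\phi(R_{i,j})}_{\,i,j}$. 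Reassembling the Hadamard product over all $(i,j)$ produces $\text{Cyl}_{\phi(S)}$, as desired.

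The main obstacle is the off-diagonal case $i\neq j$: a tensor power alone only relates slot $i$ of $\overline x$ to slot $i$ of $\overline y$, so I need the centralizer algebra to shift the second index from $i$ to $j$. The two facts that make this work are that coordinate-permutation matrices commute with the diagonal action of $\sym{V}$ and therefore lie in $\Z(\sym{V},V^k)$ (so $\widehat\phi$ fixes them), and that $\size{V}=\size{V\pr}$—which follows from $\phi(J)=J$ together with $J\cdot J=\size{V}\,J$—so that the centralizer algebras of $V^k$ and ${V\pr}^k$ are canonically identified and $P_\tau$ is genuinely sent to $P_\tau$. Once these are in place, the rest is bookkeeping with matrix-multiplication conventions.
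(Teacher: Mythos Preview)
Your argument is correct, but there is nothing in the paper to compare it against: the paper does not prove this lemma at all. It merely states it and attributes it to \cite{Barghi:2009p42} (and implicitly to \cite{Evdokimov:2000p1036}, which is the source for the companion Lemma~\ref{cylindric1}). So the ``paper's own proof'' is a citation.

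That said, your approach is exactly the natural one and is essentially how the result is proved in the cited literature: decompose $\text{Cyl}_S$ as a Hadamard product of the single-factor matrices $\tilde R_{i,j}$, then realize each $\tilde R_{i,j}$ as a matrix product of a tensor-power element $J^{\otimes(i-1)}\otimes R_{i,j}\otimes J^{\otimes(k-i)}\in W_G^k$ with a coordinate-permutation matrix $P_\tau\in\Z(\sym{V},V^k)$, and finally apply $\widehat\phi$ term by term using $\widehat\phi\!\mid_{W_G^k}=\phi^k$ and $\widehat\phi\!\mid_{\Z(\sym{V},V^k)}=I$. You also correctly flag the one genuine subtlety, namely that $\widehat\phi$ maps into an algebra over $V'$, so ``$\widehat\phi(P_\tau)=P_\tau$'' only makes sense after identifying $\Z(\sym{V},V^k)$ with $\Z(\sym{V'},(V')^k)$ via $\size{V}=\size{V'}$.

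One tiny bookkeeping point: with the usual convention $P_\tau(\bar z,\bar y)=[\bar z=\tau\cdot\bar y]$ and $(\tau\cdot\bar y)_l=y_{\tau^{-1}(l)}$, the condition you want is $\tau^{-1}(i)=j$ (equivalently $\tau(j)=i$), not $\tau(i)=j$. You already anticipated this kind of issue in your last sentence, so it is not a gap in the argument, just a choice of convention to pin down when you write it up.
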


\section{Unitary Evolution and Cellular Algebras}\label{unit}

Let $H$ be a Hamiltonian and $W=(V, \R)$ a cellular algebra containing $H$.  Then, unitary $U$ corresponding to $H$ can be written as 

$$U=e^{-itH}=\sum_{j=0}^\infty\frac{(-itH)^j}{j!}$$

We will now take advantage of the fact that $U$ lies within $W$ to express the values of the Green's function in a convenient way.  Since $H\in W$, we can write each $H^j/j!$ as a linear combination of basis relations:

$$\frac{H^j}{j!}=\sum_{R\in\R}p_R(j)\cdot R$$

This gives us

$$U=e^{-itH}=\sum_{R\in\R}\left[R\cdot\sum_{j=0}^\infty p_R(j)\cdot(-it)^j\right]$$

This gives us a convenient way of expressing Green's function in terms of the basis relations $\R$.  Since each $R\in\R$ is a 0-1 matrix, the values of the Green function are given by
$$x_R(t)=\sum_{j=0}^\infty p_R(j)\cdot(-it)^j$$
each with multiplicity 
$$m_R=\text{sum}(R)=\text{tr}(RR^T)$$

\begin{theorem} \label{indistinguishable}Let $H$ and $H\pr$ be two Hamiltonians contained in cellular algebras $W$ and $W\pr$ respectively.  Furthermore, let $\phi:W\to W\pr$ be a weak isomorphism such that $\phi(H)=H\pr$. Then, the Green functions for $H$ and $H\pr$ take on the same values with the same multiplicities.\end{theorem}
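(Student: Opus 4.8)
The plan is to track how the weak isomorphism $\phi$ acts on the decomposition of $U=e^{-itH}$ into basis relations established just above, and to show that $\phi$ carries each Green-function value to an equal value while preserving the attached multiplicity. Concretely, I would produce an explicit bijection $\R\to\R\pr$, namely the restriction of $\phi$ to the $0$-$1$ bases, and verify that it matches the pairs $\bigl(x_R(t),m_R\bigr)$.

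First I would exploit that $\phi$ preserves addition and matrix multiplication. Since $\phi(H)=H\pr$, it follows that $\phi(H^j)=(H\pr)^j$ for every $j$, hence $\phi(H^j/j!)=(H\pr)^j/j!$. Writing $H^j/j!=\sum_{R\in\R}p_R(j)\,R$ and applying $\phi$, and recalling that $\phi$ restricts to a bijection $\R\to\R\pr$, I obtain
$$\frac{(H\pr)^j}{j!}=\sum_{R\in\R}p_R(j)\,\phi(R).$$
Because $\{\phi(R):R\in\R\}$ is exactly $\R\pr$, uniqueness of the coefficients in this $0$-$1$ basis forces the $W\pr$-coefficients to satisfy $p\pr_{\phi(R)}(j)=p_R(j)$ for all $j$. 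Consequently the Green-function value attached to $\phi(R)$ is
$$x\pr_{\phi(R)}(t)=\sum_{j=0}^\infty p\pr_{\phi(R)}(j)(-it)^j=\sum_{j=0}^\infty p_R(j)(-it)^j=x_R(t),$$
so the bijection $R\mapsto\phi(R)$ matches values exactly.

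The remaining and more delicate point is that this same bijection preserves the multiplicities $m_R=\tr(RR^T)$. Here I would use that $R^T=R^\dagger$ for a $0$-$1$ matrix, that $\phi$ preserves both the adjoint $^\dagger$ and matrix multiplication, and that $RR^\dagger\in W$ since $W$ is an algebra. Together these give $\phi(R)\phi(R)^T=\phi(RR^T)$, and then Corollary \ref{treq} yields $\tr\bigl(\phi(RR^T)\bigr)=\tr(RR^T)$. Hence $m\pr_{\phi(R)}=\tr\bigl(\phi(R)\phi(R)^T\bigr)=m_R$.

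Putting the two halves together, $R\mapsto\phi(R)$ is a bijection from $\R$ to $\R\pr$ under which every value $x_R(t)$ is carried to the equal value $x\pr_{\phi(R)}(t)$ with the identical multiplicity $m_R=m\pr_{\phi(R)}$, so the two multisets of Green-function values coincide. I expect the multiplicity step to be the main obstacle: the equality of values drops out almost immediately from preservation of the ring structure, whereas equality of the $\tr(RR^T)$ rests essentially on the trace-invariance supplied by Corollary \ref{treq}, which in turn depends on the cell-size equality of Lemma \ref{celltocell}.
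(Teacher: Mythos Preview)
Your proposal is correct and follows essentially the same route as the paper's proof: you spell out explicitly why $p_R(j)=p\pr_{\phi(R)}(j)$ via the bijection of bases, and you derive $m_R=m\pr_{\phi(R)}$ from Corollary \ref{treq} after the intermediate identity $\phi(R)\phi(R)^T=\phi(RR^T)$, which the paper leaves implicit. No substantive difference in strategy.
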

\begin{proof}    Define $p\pr_R(j)$, $x\pr_R(t)$ and $m\pr_R$ as above.  Since $\phi$ is a weak isomorphism,
$$p_R(j)=p_{\phi(R)}\pr(j)$$
and therefore
$$x_R(t)=x\pr_{\phi(R)}(t)$$
Corollary \ref{treq} tells us that
$$m_R=\text{tr}(RR^T)=\text{tr}(\phi(RR^T))=m\pr_{\phi(R)}$$

Therefore, the Green functions for take on the same values with the same multiplicities.
\end{proof}

\section{Interacting 2-Boson Walks and 2-Equivalence}
\label{2boson}

We will first consider the case of two interacting Bosons.  We will extend this to to $k$ particles in the following section, but we include the 2-Boson case separately as it directly addresses \cite{Gamble:2010p1085}.  In \cite{Gamble:2010p1085}, the Hamiltonian for a two-Boson quantum walk is given by:

$$H_{2B}=-\frac{1}{2}(I+S)A^{\oplus2}+UR$$

where $U$ is a constant energy cost and

\begin{flalign*}
S=&\sum_{i,j}\ketbra{ij}{ji},\\
R=&\sum_i\ketbra{ii}{ii},\\
A^{\oplus n}=&(\overbrace{A\otimes I\otimes...\otimes I}^n)+...+(I\otimes I\otimes...\otimes A)
\end{flalign*}

The following lemma is a direct consequence of the definition of 2-extension, as well as Lemma \ref{cylindric2} regarding cylindric relations:
\begin{lemma} \label{HtoH} If $\phi$ is a 2-equivalence from $G$ to $G\pr$ with corresponding 2-Boson Hamiltonians $H_{2B}$ and $H\pr_{2B}$, then
\begin{enumerate}[(i)]
\item $\widehat\phi(S)=S$
\item $\widehat\phi(A^{\oplus2})=(A\pr)^{\oplus2}$
\item$\widehat\phi(R)=R$
\end{enumerate}
and therefore $\widehat{\phi}(H_{2B})=H\pr_{2B}$.

\end{lemma}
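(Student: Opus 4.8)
The plan is to reduce the claim to the action of $\widehat\phi$ on the three building blocks $S$, $A^{\oplus2}$ and $R$, and then recombine them. Because $\widehat\phi$ is a weak isomorphism it is linear and multiplicative and fixes the identity of the $2$-extension, so once (i)--(iii) are in hand the final identity follows by expanding $\widehat\phi\big((I+S)A^{\oplus2}\big)=\big(\widehat\phi(I)+\widehat\phi(S)\big)\,\widehat\phi(A^{\oplus2})$ and reading off $\widehat\phi(H_{2B})=-\frac{1}{2}(I+S)(A\pr)^{\oplus2}+UR=H\pr_{2B}$. Thus all the content is in establishing (i)--(iii).

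For (ii) I would use the $W_G^2$ part of the definition of $2$-equivalence. Since $A,I\in W_G$, both $A\otimes I$ and $I\otimes A$ lie in $W_G^2$, and there $\widehat\phi$ agrees with $\phi^2$, acting as $\phi\otimes\phi$. Hence $\widehat\phi(A\otimes I)=\phi(A)\otimes\phi(I)=A\pr\otimes I$ and $\widehat\phi(I\otimes A)=I\otimes A\pr$, using $\phi(A)=A\pr$ and $\phi(I)=I$; summing gives $\widehat\phi(A^{\oplus2})=(A\pr)^{\oplus2}$.

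For (i) and (iii) I would write $S$ and $R$ as cylindric relations and invoke Lemma \ref{cylindric2}. With $\overline{x}=(x_1,x_2)$ one checks $S(\overline{x},\overline{y})=I(x_1,y_2)\,I(x_2,y_1)$, which is the cylindric relation with $R_{1,2}=R_{2,1}=I$ and $R_{1,1}=R_{2,2}=J$, while $R(\overline{x},\overline{y})=I(x_1,y_1)\,I(x_1,y_2)\,I(x_2,y_1)\,I(x_2,y_2)$ is the cylindric relation with every $R_{i,j}=I$. By Lemma \ref{cylindric1} both lie in $\widehat{W_G}^{(2)}$, so $\widehat\phi$ is defined on them, and Lemma \ref{cylindric2} replaces each entry $R_{i,j}$ by $\phi(R_{i,j})$. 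Since $\phi(I)=I$ and $\phi(J)=J$ (a weak isomorphism maps $\R$ bijectively onto $\R\pr$ and hence fixes their sum $J$), and since $\size{V}=\size{V\pr}$ by Corollary \ref{treq} applied to $I$, the images are precisely the swap and the diagonal projector over $(V\pr)^2$, which establishes (i) and (iii).

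The routine ingredients are the preservation facts $\phi(I)=I$, $\phi(J)=J$, $\phi(A)=A\pr$ together with the linearity and multiplicativity of $\widehat\phi$. The step that needs care is the cylindric encoding of $S$ and especially $R$: I must verify that the chosen placement of $I$'s and $J$'s reproduces the correct $0$--$1$ pattern. For $R$ the four identity factors must collapse all four coordinates to a single vertex, so that $\text{Cyl}$ equals $\sum_i\ketbra{ii}{ii}$ and not some coarser diagonal relation; for $S$ the identities must sit in the off-diagonal slots $(1,2)$ and $(2,1)$ with $J$ on the diagonal so as to encode exactly $x_1=y_2,\ x_2=y_1$. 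A wrong choice of the $R_{i,j}$ would silently yield the wrong operator, so this is the only genuinely delicate point.
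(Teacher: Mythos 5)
Your proof is correct and follows essentially the same route as the paper, which proves this lemma as the $k=2$ case of Lemma \ref{HtoHk}: part (ii) via $\widehat\phi\mid_{W_G^2}=\phi^2$, and parts (i) and (iii) by encoding $S$ and $R$ as cylindric relations and applying Lemma \ref{cylindric2} together with $\phi(I)=I$ and $\phi(J)=J$. Your explicit verification of the cylindric encodings (all $R_{i,j}=I$ for $R$; $I$ in the off-diagonal slots and $J$ on the diagonal for $S$) is a useful elaboration of what the paper merely asserts.
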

\begin{proof} See the proof of Lemma \ref{HtoHk} for a more general proof.\end{proof}

Combining Theorem \ref{indistinguishable} and Lemma \ref{HtoH}, we arrive at the following:

\begin{theorem} \label{2bos} If $G$ and $G\pr$ are 2-equivalent graphs, then they are not distinguished by the interacting 2-Boson walk.\end{theorem}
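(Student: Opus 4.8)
The plan is to assemble Theorem~\ref{indistinguishable} and Lemma~\ref{HtoH}, so the argument is essentially a matter of checking that their hypotheses are satisfied. First I would verify that both two-Boson Hamiltonians lie in the relevant $2$-extensions, which is exactly what lets Theorem~\ref{indistinguishable} apply. Since $A\in W_G$, both $A\otimes I$ and $I\otimes A$ lie in $W_G^2$, so $A^{\oplus 2}\in W_G^2$. The operators $S$ and $R$ are each fixed by the entrywise action of $\sym{V}$ on $V^2$: permuting vertices commutes with swapping the two particle coordinates, and it preserves the diagonal terms $\ketbra{ii}{ii}$, so $S,R\in\Z(\sym{V},V^2)$. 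Therefore $H_{2B}=-\oo{2}(I+S)A^{\oplus 2}+UR$ is an element of $\widehat{W_G}^{(2)}=[W_G^2,\Z(\sym{V},V^2)]$, and the same reasoning places $H\pr_{2B}$ in $\widehat{W_{G\pr}}^{(2)}$.

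Next I would invoke the $2$-equivalence directly. By definition it supplies a weak isomorphism $\widehat\phi\colon\widehat{W_G}^{(2)}\to\widehat{W_{G\pr}}^{(2)}$ restricting to $\phi^2$ on $W_G^2$ and to the identity on $\Z(\sym{V},V^2)$. Lemma~\ref{HtoH} then yields $\widehat\phi(S)=S$, $\widehat\phi(A^{\oplus 2})=(A\pr)^{\oplus 2}$ and $\widehat\phi(R)=R$, whence $\widehat\phi(H_{2B})=H\pr_{2B}$ follows since $\widehat\phi$ preserves addition and matrix multiplication.

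Finally I would apply Theorem~\ref{indistinguishable} to the cellular algebras $\widehat{W_G}^{(2)}$ and $\widehat{W_{G\pr}}^{(2)}$, the Hamiltonians $H_{2B}$ and $H\pr_{2B}$, and the weak isomorphism $\widehat\phi$ with $\widehat\phi(H_{2B})=H\pr_{2B}$. This gives that the Green functions of the two walks take the same values with the same multiplicities. Since two graphs are distinguishable precisely when their sets of Green's functions differ, and here these sets coincide, the interacting $2$-Boson walk does not distinguish $G$ from $G\pr$.

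I do not expect a genuine obstacle: the statement is a corollary of the two preceding results, and the argument is a direct substitution. The one point that truly requires attention is the containment $H_{2B}\in\widehat{W_G}^{(2)}$ established above, since without it Theorem~\ref{indistinguishable} could not be invoked; in particular it matters that $S$ and $R$ are accounted for by the centralizer-algebra factor of the $2$-extension rather than by $W_G^2$ alone.
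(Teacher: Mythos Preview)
Your proposal is correct and follows the same approach as the paper, which simply states that the result follows by combining Theorem~\ref{indistinguishable} and Lemma~\ref{HtoH}. Your additional verification that $H_{2B}\in\widehat{W_G}^{(2)}$ makes explicit a detail the paper leaves implicit, but the structure of the argument is identical.
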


\section{Interacting $k$-Boson Walks and $k$-Equivalence}
\label{kboson}

We will now consider the $k$-boson case.  We will consider each term of the Hamiltonian in turn. First, we replace the term $(I+S)$ from the 2-particle Hamiltonian with $\sum_{\sym{k}}S$.  That is, we will work within the subspace invariant under any permutation of the $k$ particles.  Next we replace the term $A^{\oplus2}$ with $A^{\oplus k}$.  

Finally, we consider the interaction term.  We would like the energy contribution from each site to be a function of the number of particles at that site.  Let $\overline{x}\in V^k$ be a basis state of the $k$-Boson system.  Then, define $v_x$ as the number of particles at vertex $v$ in state $x$.  Then, $V_x=\{v_x:\;v\in V\}$.  Then, we can partition $V^k$ into equivalence classes $X_1,...,X_\ell$ such that $x$ and $y$ are in the same class if and only if $V_x=V_y$.  Let $R_i=\sum_{x\in X_i}\ketbra{x}{x}$.  Finally, to each of the $X_i$, we assign an energy penalty $U_i$.  This gives us the interaction term, $\sum_{i=1}^\ell U_iR_i$

Putting these together, we arrive at our $k$-Boson Hamiltonian:

$$H_{kB}=-\frac{1}{k!}\left(\sum_{\sym{k}}S\right)A^{\oplus k}+\sum_{i=1}^{\ell}U_i R_i$$

Note that this expression allows for a good deal of flexibility in the nature of the interaction between particles.  In particular, it includes the non-interacting case, as well as the Bose-Hubbard model, in which the contribution from each site is proportional to the square of the number of particles at that site.

We are now ready to prove the following lemma:

\begin{lemma} \label{HtoHk} If $\phi$ is a k-equivalence from $G$ to $G\pr$ with corresponding $k$-Boson Hamiltonians $H_{kB}$ and $H\pr_{kB}$, then
\begin{enumerate}[(i)]
\item $\forall S\in \sym{k},\;\widehat\phi(S)=S$
\item $\widehat\phi(A^{\oplus k})=(A\pr)^{\oplus k}$
\item $\forall i,\;\widehat\phi(R_i)=R_i$
\end{enumerate}
and therefore $\widehat{\phi}(H)=H\pr$.
\end{lemma}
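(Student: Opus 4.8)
The plan is to treat each of the three operators separately by locating it inside the appropriate piece of the $k$-extension $\widehat{W_G}^{(k)}=[W_G^k,\Z(\sym{V},V^k)]$, and then applying the corresponding restriction of $\widehat\phi$: on $W_G^k$ the map acts as $\phi^k$, while on the centralizer algebra $\Z(\sym{V},V^k)$ it acts as the identity. The conclusion $\widehat\phi(H_{kB})=H\pr_{kB}$ then assembles from (i)--(iii) using that $\widehat\phi$, as a weak isomorphism, respects addition, scalar multiples and matrix multiplication.

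For (i), I would show that each particle-permutation operator $S\in\sym{k}$ already lies in the centralizer algebra $\Z(\sym{V},V^k)$: permuting the $k$ coordinates of a tuple $\overline{x}\in V^k$ commutes with the entrywise action of any $g\in\sym{V}$, so $S$ is $\sym{V}$-invariant. The hypothesis $\widehat\phi\mid_{\Z(\sym{V},V^k)}=I$ then gives $\widehat\phi(S)=S$. For (ii), each summand of $A^{\oplus k}$ has the form $I\otimes\cdots\otimes A\otimes\cdots\otimes I$, a tensor of elements of $W_G$ and hence a member of $W_G^k$; applying $\widehat\phi\mid_{W_G^k}=\phi^k$ with $\phi(A)=A\pr$ (from the underlying equivalence) and $\phi(I)=I$ sends it to $I\otimes\cdots\otimes A\pr\otimes\cdots\otimes I$, and summing over the $k$ slots yields $(A\pr)^{\oplus k}$. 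Equivalently, one may write each summand as a cylindric relation $\text{Cyl}_S$ with $A$ in one diagonal position, $I$ in the remaining diagonal positions and $J$ off the diagonal, and invoke Lemma \ref{cylindric2} directly.

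For (iii), the key is again a centralizer membership: each projector $R_i=\sum_{x\in X_i}\ketbra{x}{x}$ belongs to $\Z(\sym{V},V^k)$. The class $X_i$ is determined by the multiset of occupation numbers $V_x$, and applying $g\in\sym{V}$ to a state $\overline{x}$ only changes which vertex holds which particles, leaving $V_x$ fixed; hence $g$ maps $X_i$ onto itself. A diagonal projector supported on a $\sym{V}$-invariant set of basis states is therefore $\sym{V}$-invariant, so $R_i\in\Z(\sym{V},V^k)$ and $\widehat\phi(R_i)=R_i$. Feeding (i)--(iii) into $H_{kB}=-\frac{1}{k!}\left(\sum_{\sym{k}}S\right)A^{\oplus k}+\sum_{i=1}^{\ell}U_i R_i$ and using the linearity and multiplicativity of $\widehat\phi$ gives $\widehat\phi(H_{kB})=H\pr_{kB}$.

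The step I expect to be the main obstacle is (iii): verifying that the occupation-number projectors are invariant under the diagonal $\sym{V}$-action, which depends on the classes $X_i$ being defined by the \emph{multiset} $V_x$ rather than by an ordered tuple of occupation numbers. The corresponding membership claim in (i) is conceptually similar but more transparent. Once both centralizer memberships are established, (ii) and the final assembly are routine bookkeeping with the two restrictions of $\widehat\phi$.
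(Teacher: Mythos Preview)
Your argument is correct, but for parts (i) and (iii) it takes a different route from the paper. The paper expresses each permutation operator $S$ and each occupation projector $R_i$ as a cylindric relation (or a sum of them) built from the blocks $I$, $J$, and $J-I$, and then applies Lemma~\ref{cylindric2} together with $\phi(I)=I$ and $\phi(J)=J$ to conclude $\widehat\phi(S)=S$ and $\widehat\phi(R_i)=R_i$. You instead verify directly that $S$ and $R_i$ lie in the centralizer $\Z(\sym{V},V^k)$ and invoke the hypothesis $\widehat\phi\mid_{\Z(\sym{V},V^k)}=I$. Your route is more self-contained: it uses only the two restriction clauses in the definition of $k$-equivalence and never touches the cylindric machinery, whereas the paper keeps the argument inside the cylindric framework it has already set up. Your approach also sidesteps a small wrinkle in the paper's phrasing, since for a general occupation pattern $R_i$ is really a \emph{sum} of cylindric relations (one per partition of $\{1,\dots,k\}$ with the given block sizes) rather than a single one. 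For part~(ii) your argument and the paper's coincide.
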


\begin{proof} First, we note that any $S\in\sym{k}$ is simply a cylindric relation, with $R_{i,j}\in\{I,J\}$ for all $(i,j)$.  Since $\widehat\phi(I)=I$ and $\widehat\phi(J)=J$, we can apply Lemma \ref{cylindric2} to prove (i).

Similarly, each $R_i$ is a cylindric relation with $R_{i,j}\in \{I,(J-I)\}$, so the same reasoning implies that (iii).

Finally, the definition of $k$-equivalence requires that $\widehat\phi\mid_{W^k}=\phi^k$ where $\phi:W\to W\pr$ is a weak isomorphism such that $\phi(A)=A\pr$.  Therefore, 
\begin{flalign*}
\widehat\phi(A\otimes....\otimes I)=&(A\pr\otimes....\otimes I),\\
\widehat\phi(I\otimes A\otimes....\otimes I)=&(I\otimes A\pr\otimes....\otimes I),\\
&\vdots\\
\widehat\phi(I\otimes...\otimes A)=&(I\otimes...\otimes A\pr)
\end{flalign*}
and $\widehat\phi(A^{\oplus k})=(A\pr)^{\oplus k}$, proving (ii).

Combining (i), (ii) and (iii) gives us $\widehat{\phi}(H)=H\pr$.
\end{proof}

Applying Theorem \ref{indistinguishable} gives us the following generalization of Theorem \ref{2bos}:
\begin{theorem}\label{kequivkbos}If $G$ and $G\pr$ are $k$-equivalent graphs, then they are not distinguished by the interacting $k$-Boson quantum walk.\end{theorem}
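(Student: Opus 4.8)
The plan is to reduce Theorem \ref{kequivkbos} to the two machine parts already assembled in the excerpt, namely Theorem \ref{indistinguishable} (a weak isomorphism carrying one Hamiltonian to another forces the Green functions to agree as multisets of values-with-multiplicities) and Lemma \ref{HtoHk} (a $k$-equivalence induces a weak isomorphism $\widehat\phi$ of the $k$-extensions sending $H_{kB}$ to $H\pr_{kB}$). The theorem is essentially a corollary of these two facts, so the proof is a matter of verifying that the hypotheses of Theorem \ref{indistinguishable} are met and then unwinding the definition of ``distinguishable'' from the introduction.

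Concretely, I would proceed as follows. First, fix a $k$-equivalence $\phi$ from $G$ to $G\pr$; by definition this supplies a weak isomorphism $\widehat\phi\colon\extwk_G\to\extwk_{G\pr}$ between the $k$-extensions. Second, observe that the $k$-Boson Hamiltonian $H_{kB}$ lives inside $\extwk_G$: each of its constituent pieces---the symmetrizer $\sum_{\sym{k}}S$, the hopping term $A^{\oplus k}$, and the interaction term $\sum_i U_i R_i$---is an element of the $k$-extension, the permutation and interaction terms because they are cylindric relations (Lemma \ref{cylindric1}) and the hopping term because it is a sum of tensor factors of $A\in W_G$. Third, invoke Lemma \ref{HtoHk} to conclude $\widehat\phi(H_{kB})=H\pr_{kB}$. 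Fourth, apply Theorem \ref{indistinguishable} with $W=\extwk_G$, $W\pr=\extwk_{G\pr}$, $H=H_{kB}$, $H\pr=H\pr_{kB}$ and the weak isomorphism $\widehat\phi$, yielding that the Green functions of the two walks take the same values with the same multiplicities. Fifth, recall that two graphs are distinguished precisely when their sets of Green's functions differ; since the multisets coincide, $G$ and $G\pr$ are not distinguished by the interacting $k$-Boson walk.

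I do not expect a genuine obstacle, since the heavy lifting is done by the cited results. The one point that warrants care is the containment $H_{kB}\in\extwk_G$, which is implicit in Lemma \ref{HtoHk} but ought to be stated explicitly so that Theorem \ref{indistinguishable} applies verbatim; this is where the cylindric-relation membership from Lemma \ref{cylindric1} is used, and it is the only step that is not purely formal. A secondary subtlety is that Theorem \ref{indistinguishable} compares values-with-multiplicities of the Green function $x_R(t)$ indexed over basis relations, whereas the physical Green's function $\mathcal{G}(i,j)=\bra{j}U\ket{i}$ is indexed over basis states; one should note that these carry the same information because each entry $\mathcal{G}(i,j)$ equals $x_R(t)$ for the unique basis relation $R$ with $R(i,j)=1$, so equality of the multiset $\{(x_R,m_R)\}$ is exactly equality of the sets of Green's function values. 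With these two bookkeeping remarks in place, the theorem follows immediately by the chain Lemma \ref{HtoHk} $\Rightarrow$ Theorem \ref{indistinguishable} $\Rightarrow$ indistinguishability.
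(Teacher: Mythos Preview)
Your proposal is correct and follows exactly the paper's approach: the paper simply states that Theorem \ref{kequivkbos} follows by applying Theorem \ref{indistinguishable} after Lemma \ref{HtoHk}, which is precisely your chain of reasoning. Your additional remarks about the containment $H_{kB}\in\extwk_G$ and the passage from basis-relation-indexed values to basis-state-indexed Green's functions are useful clarifications that the paper leaves implicit, but they do not constitute a different route.
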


\section{Conclusion}

Theorem \ref{kequivkbos}, along with the $k$-equivalent graph constructions given in \cite{Barghi:2009p42}, prove that there is no integer $k$ such that the interacting $k$-Boson quantum walk distinguishes all pairs of non-isomorphic graphs.  This gives a negative answer to a question posed in \cite{Gamble:2010p1085}.  However, while \cite{Gamble:2010p1085} is primarily concerned with strongly regular graphs, the constructions provided by \cite{Barghi:2009p42} are not strongly regular.  The power of multi-particle quantum walks to distinguish strongly regular graphs remains an open question--- a particularly interesting one since strongly regular graphs have particularly simple cellular closures.

\bibliographystyle{plain}
\bibliography{bosons}

\end{document}